\documentclass[11pt]{amsart}
\usepackage{amsmath}
\usepackage{amssymb}
\usepackage{amsfonts}
\usepackage{latexsym}
\usepackage{fancyhdr}
\usepackage{graphics}
\setlength{\oddsidemargin}{.55in} \setlength{\evensidemargin}{.in}
\textwidth14.5cm \normalbaselineskip=14pt \normalbaselines
\headsep10mm

\newcommand{\norm}[1]{ \parallel #1 \parallel}
\newcommand{\mb}{\mathbb}
\newcommand{\mc}{\mathcal}
\newcommand{\eul}{\mathfrak}

\newcommand{\Ao}{{\eul A}_{\scriptscriptstyle 0}}

\newcommand{\M}{\eul M}

\newcommand{\HH}{\mc H}

\newcommand{\id}{{\mb I}}

\newcommand{\FF}{\mathfrak F}

\newcommand{\vp}{\varphi}
\newcommand{\vna}{von Neumann algebra}
\newtheorem{defn}{Definition}[section]

\newtheorem{thm}[defn]{Theorem}
\newtheorem{lemma}[defn]{Lemma}
\newtheorem{cor}[defn]{Corollary}
\newtheorem{example}[defn]{Example}
\newtheorem{rem}[defn]{Remark}
\def\x{\relax\ifmmode {\mbox{*}}\else*\fi}
\newcommand{\beex}{\begin{example}$\!\!${\bf }$\;$\rm }
\newcommand{\enex}{ \end{example}}
\newcommand{\berem}{\begin{rem}$\!\!${\bf }$\;$\rm }
\newcommand{\enrem}{ \end{rem}}

\begin{document}
\title[Faithful traces on a von Neumann algebra] {A note on faithful traces\\ on a von Neumann algebra }%
\author{F. Bagarello}%
\address{Dipartimento di Metodi e Modelli Matematici, Universit\`a di Palermo, Facolt\`a d'Ingegneria, I-90128 Palermo (Italy)}%
\email{bagarell@unipa.it}%
\author{C. Trapani}%
\address{Dipartimento di Matematica ed Applicazioni, Universit\`a di Palermo, I-90123 Palermo (Italy)}%
\email{trapani@unipa.it}
\author{S. Triolo}%
\address{Dipartimento di Matematica ed Applicazioni, Universit\`a di Palermo, I-90123 Palermo (Italy)}%
\email{salvo@math.unipa.it}
\subjclass[2000]{Primary 46L08; Secondary 46L51, 47L60 }



\begin{abstract} In this short note we give some techniques for constructing,
starting from a {\it sufficient} family $\mc F$ of semifinite or
finite traces on a von Neumann algebra $\M$, a new trace which is
faithful.
\end{abstract}
\maketitle

\section{Introduction and preliminaries}
 It is known that a semifinite von Neumann algebra always has a faithful semifinite trace.
This trace can be used, for instance, to build up a
non-commutative integration and, consequently, to define non
commutative $L^p$-spaces. In this note we give some techniques for
constructing, starting from a family $\FF$ of semifinite traces, a
faithful one which is closely related to the family $\FF$.

  Let ${\FF}=\{\eta_\alpha;\,
\alpha \in {\mc I}\}$ be a family of normal, semifinite traces on
$\M$. We say that the family ${\FF}$ is {\it sufficient} if for $X
\in \M,\, X \geq 0$ and $\eta_\alpha(X)=0$ for every $\alpha \in
{\mc I}$, then $X=0$ (clearly, if $\FF=\{\eta\}$, then $\FF$ is
sufficient if, and only if, $\eta$ is faithful). In this case,
$\M$ is a semifinite von Neumann algebra \cite[ch.5]{takesaki}.
The analysis would really be simplified if, from a given family
$\FF$ of normal semifinite traces, one could extract a sufficient
subfamily ${\mc G}$ of traces with mutually orthogonal supports.
Apart from quite simple situations (for instance when $\FF$ is
finite), we do not know if this is possible or not. There are
however at least two relevant cases where this can be done without
many difficulties. The first case occurs when $\FF$ is countable
and the second  when $\FF$ is a convex and $w^*$-compact family of
finite traces on $\M$. These two situations will be discussed
here.

In the sequel we will need the following Lemmas.

\begin{lemma}\label{thelemma}
Let $\mathfrak{M}$ be a von Neumann algebra in Hilbert space
$\HH$, $\{P_\alpha\}_{\alpha\in\mathcal{I}}$ a
family of projections of $\mathfrak{M}$ with
$$\bigvee_{\alpha\in\mathcal{I}}P_\alpha=\overline{P}.$$
If $A\in\mathfrak{M}$ and $ AP_\alpha=0 $ for every
$\alpha\in\mathcal{I}$, then $A\overline{P}=0.$
\end{lemma}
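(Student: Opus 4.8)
The plan is to prove the statement geometrically, passing from projections to the closed subspaces they project onto. The starting observation is that the hypothesis $AP_\alpha=0$ says precisely that $A$ annihilates $\operatorname{ran}P_\alpha$; since $P_\alpha$ is an orthogonal projection, this is the same as the inclusion $\operatorname{ran}P_\alpha\subseteq\ker A$, for every $\alpha\in\mathcal{I}$.

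Next I would invoke the defining property of the supremum $\overline{P}=\bigvee_{\alpha\in\mathcal{I}}P_\alpha$ in the lattice of projections: $\overline{P}$ is the orthogonal projection onto the closed linear span $\mathcal{K}$ of $\bigcup_{\alpha\in\mathcal{I}}\operatorname{ran}P_\alpha$, and equivalently it is the least projection dominating every $P_\alpha$. Because $A$ is bounded, $\ker A$ is a \emph{closed} subspace of $\HH$; being closed and containing each $\operatorname{ran}P_\alpha$, it must contain their closed linear span, i.e. $\mathcal{K}\subseteq\ker A$, that is $\operatorname{ran}\overline{P}\subseteq\ker A$. Consequently, for every $\xi\in\HH$ the vector $\overline{P}\xi$ lies in $\ker A$, so $A\overline{P}\xi=0$; as $\xi$ is arbitrary, $A\overline{P}=0$, as claimed.

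A slightly more operator-algebraic packaging of the same idea, which sidesteps any concern about whether the supremum is computed in $\mathfrak{M}$ or in $\mathcal{B}(\HH)$, runs as follows: let $Q\in\mathcal{B}(\HH)$ be the orthogonal projection onto $\ker A$, so that $AQ=0$. The first step shows $P_\alpha\le Q$ for every $\alpha$, hence $\overline{P}=\bigvee_\alpha P_\alpha\le Q$, which gives $Q\overline{P}=\overline{P}$; therefore $A\overline{P}=A\,Q\,\overline{P}=0$. (One may note in passing that the supremum of a family of projections of $\mathfrak{M}$ again lies in $\mathfrak{M}$, so the statement is internally consistent, but this fact is not needed for the argument.)

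There is no genuine obstacle here: the only point that deserves an explicit line is the identification of $\bigvee_\alpha P_\alpha$ with the projection onto the closed span of the ranges, together with the elementary remark that $\ker A$ is closed. Once these are recorded, the inclusion $\operatorname{ran}\overline{P}\subseteq\ker A$ is immediate and the conclusion follows.
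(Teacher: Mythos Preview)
Your argument is correct. The paper's proof takes a slightly different tack: it first replaces the family by the upward-directed net of finite joins $P_{\alpha_1}\vee\cdots\vee P_{\alpha_n}$ (observing that $A$ still annihilates each of these, since such a join projects onto the linear span of the individual ranges), and then uses that an increasing net of projections converges strongly to its supremum together with the estimate $\|AP_\alpha\xi - A\overline{P}\xi\|\le\|A\|\,\|P_\alpha\xi-\overline{P}\xi\|$ to pass to the limit. Your route compresses this approximation into the single observation that $\ker A$ is a closed subspace containing every $\operatorname{ran}P_\alpha$, hence their closed linear span; this is cleaner and avoids the explicit net argument, while the paper's version makes the convergence visible. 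One minor quibble: your second, lattice-theoretic packaging does not really \emph{sidestep} the question of where the supremum is computed, since the conclusion $\overline{P}\le Q$ from ``$Q$ is an upper bound'' is a priori valid only when $Q$ belongs to the lattice in which $\overline{P}$ is declared to be the supremum. What makes the step legitimate is precisely the standard fact you record parenthetically, that the supremum of projections in $\mathfrak{M}$ coincides with the supremum taken in $\mathcal{B}(\mathcal{H})$; so that fact is used, not bypassed.
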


\begin{proof}
Without loss of generality, we may suppose that
$\mathcal{I}$ is directed upward and so $\{P_\alpha\}$ is a
net.  Were it not so, the family
$$\Gamma:=\{P_{{\alpha}_1}\vee.....\vee P_{{\alpha}_n}; \, {\alpha}_1,........,{\alpha}_n\in\mathcal{I}\}$$
would be a net. Clearly $\sup\Gamma=\overline{P}.$ We would also
have $$A(P_{{\alpha}_1}\vee....\vee P_{{\alpha}_n})=0$$ since
$P_{{\alpha}_1}\vee....\vee P_{{\alpha}_n}$ is the projection onto
the subspace generated by
$P_{{\alpha}_1}\HH,.....,P_{{\alpha}_n}\HH$ and $A$ vanishes on
each one of these subspaces. But, as is known, if $\{P_\alpha\}$
is a net, then $P_\alpha\rightarrow \overline{P}$ strongly. Hence,
for every $\xi\in\HH$,
$$\norm{AP_\alpha\xi-A\overline{P}\xi}\leq\norm{A} \norm{P_\alpha\xi-\overline{P}\xi}\rightarrow 0.$$
So, if $AP_\alpha=0,$ we conclude that $A\overline{P}=0.$
\end{proof}

We remind that if $\varphi$ is a trace on $\M$, the support of
$\varphi$ is the complement of the largest projection of $\M$ that
annihilates $\vp$.

\begin{lemma}\label{lemma2}
Let $\FF=\{\eta_\alpha\}_{\alpha\in\mathcal{I}}$ be a sufficient
family of normal, semifinite traces on the von Neumann algebra
$\mathfrak{M}$ and let $P_\alpha$ be the support of $\eta_\alpha.$
Then, $\vee P_\alpha=\mathbb{I}$, where $\mathbb{I}$ denotes the
identity of $\mathfrak{M}.$
\end{lemma}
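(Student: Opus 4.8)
The plan is to show that the complement of $\vee P_\alpha$ is zero by exploiting the sufficiency of $\FF$. Let $Q := \id - \vee P_\alpha$; I want to show $Q = 0$. Since each $\eta_\alpha$ annihilates the complement of its support $P_\alpha$, and $Q \leq \id - P_\alpha$ for every $\alpha$, the projection $Q$ is annihilated by every $\eta_\alpha$. The natural move is then to invoke sufficiency: I would like to conclude that $Q = 0$ because $Q \geq 0$ and $\eta_\alpha(Q) = 0$ for all $\alpha$.

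The one point that needs care is the precise meaning of "$\eta_\alpha$ annihilates the complement of its support" for a semifinite (not necessarily finite) trace, since $\eta_\alpha$ is only defined with finite value on a weakly dense ideal. Here I would use the standard fact recalled just before the statement: the support of $\vp$ is the complement of the largest projection $E$ with $\vp(E \cdot E) = 0$, equivalently (by the trace property and semifiniteness) $\vp(ExE) = 0$ for all $x \in \M^+$. Thus if $P_\alpha$ is the support of $\eta_\alpha$, then $\eta_\alpha((\id - P_\alpha) \cdot (\id - P_\alpha)) \equiv 0$. Since $0 \leq Q \leq \id - P_\alpha$, monotonicity of the (possibly extended-valued) trace gives $\eta_\alpha(Q) = \eta_\alpha(Q \cdot Q) = 0$, and this holds for every $\alpha$.

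Now sufficiency of $\FF$ applies directly to the positive element $Q \in \M$: since $\eta_\alpha(Q) = 0$ for every $\alpha \in \mc I$, we get $Q = 0$, i.e. $\vee P_\alpha = \id$. I expect the main (minor) obstacle to be purely bookkeeping: making sure the equality $\eta_\alpha(Q) = 0$ is legitimate in the semifinite setting, which is why recalling the characterization of the support in terms of the largest annihilating projection is the key preliminary step rather than working with $\eta_\alpha$ naively as a finite functional. Note that Lemma \ref{thelemma} is not actually needed for this argument — sufficiency does the work in one stroke — although it would be relevant if one instead tried to argue via an operator $A$ with $AP_\alpha = 0$; here the cleaner route is the direct application of the definition of sufficiency to $Q$.
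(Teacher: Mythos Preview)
Your argument is correct and is essentially identical to the paper's: set $Q=\id-\bigvee_\alpha P_\alpha=\bigwedge_\alpha(\id-P_\alpha)$, use $0\le Q\le \id-P_\alpha$ and the definition of support to get $\eta_\alpha(Q)=0$ for every $\alpha$, then invoke sufficiency. Your extra care about the semifinite case and your observation that Lemma~\ref{thelemma} is not needed here are both accurate, but the underlying proof is the same as the paper's.
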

\begin{proof}
Indeed, if we put
$S=\mathbb{I}-\bigvee_{\alpha\in\mathcal{I}}\{P_\alpha\},$ we get
$$S=\mathbb{I}-\bigvee_{\alpha\in\mathcal{I}}{P_\alpha}=\bigwedge_{\alpha\in\mathcal{I}}(\mathbb{I}-P_\alpha).$$
Therefore, since $S$ is positive, if $\alpha\in\mathcal{I}$,
$$0\leq \eta_\alpha(S) = \eta_\alpha(\bigwedge_{\beta\in\mathcal{I}}(\id -
P_\beta)) \leq \eta_\alpha(\id - P_\alpha)=0,$$ by definition of
support.\\
Thus, $\eta_\alpha(S)=0$ for every $\alpha \in I$. This implies
that $S=0$ since $\FF$ is sufficient.
\end{proof}
\section{Faithful traces on a von Neumann algebra}

\medskip
Let $\FF =\{\eta_\alpha\}_{\alpha\in\mathcal{I}}$ be a sufficient
family of normal, semifinite traces on the von Neumann algebra
$\mathfrak{M}$. The traces $\eta_\alpha$ are not necessarily
faithful. Let $P_\alpha$ denote the support of $\eta_\alpha$. Then
it is well-known that
\begin{itemize}
  \item[(i)] $P_\alpha\in {\mc Z}(\M)$, the center of $\M$, for
  each $\alpha \in I$.
  \item[(ii)] $\eta_\alpha(X) = \eta_\alpha(XP_\alpha)$, for
  each $\alpha \in
  I$ and for each $X \in \M$.
\end{itemize}
Put $\M_\alpha = \M P_\alpha$. Each $\M_\alpha$ is a \vna\ and
$\vp_\alpha$ is faithful in $\M P_\alpha$ \cite[Proposition V. 2.10]{takesaki}.

More precisely,
$$\M_\alpha:=\M P_\alpha = \{ Z=XP_\alpha, \mbox{ for some } X \in
\M\}.$$

The positive cone $\M_\alpha^+$ of $\M_\alpha$ surely contains the
set

$$\{ Z=XP_\alpha, \mbox{ for some } X \in
\M^+\}.$$ For $Z = XP_\alpha \in \M_\alpha^+$, we put:
$$ \sigma_\alpha (Z):= \eta_\alpha(XP_\alpha).$$
The definition of $\sigma_\alpha (Z)$ does not depend on the
particular choice of $X$. Indeed, if $Z=YP_\alpha$, too, then
$$ \eta_\alpha(XP_\alpha)= \eta_\alpha(ZP_\alpha)= \eta_\alpha(YP_\alpha)$$ since $ZP_\alpha= XP_\alpha = YP_\alpha$.
$\sigma_\alpha$ is a normal, semifinite, faithful trace on
$\M_\alpha$.

\begin{thm}
Let $\FF=\{\eta_n,\,n\in\mathbb{N}\}$ be a countable sufficient
family of normal, semifinite traces on a von Neumann algebra
$\mathfrak{M}$. If $P_n$ denotes the support of $\eta_n,$ then
$$\sigma(X)=\sum_{n\in\mathbb{N}} \eta_n[P_n
\prod_{k<n}(\mathbb{I}-P_k)X] \quad \quad X\in\mathfrak{M}^+$$ is
a faithful semifinite normal trace on $\mathfrak{M}.$
 \end{thm}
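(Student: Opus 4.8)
The plan is to reduce the countable case to a family of orthogonal central projections and a corresponding sum of mutually orthogonal faithful traces. Set $Q_1 = P_1$ and, for $n \geq 2$, set $Q_n = P_n \prod_{k<n}(\id - P_k)$. Since all the $P_k$ lie in $\mc Z(\M)$ (fact (i) above), each $Q_n$ is a projection in the center, and the $Q_n$ are pairwise orthogonal; moreover the formula defining $\sigma$ reads $\sigma(X) = \sum_n \eta_n[Q_n X]$. The first step is to check that $\sum_n Q_n = \id$: a telescoping computation gives $\sum_{n=1}^{N} Q_n = \id - \prod_{k=1}^{N}(\id - P_k)$, and the right-hand side increases to $\id - \bigwedge_k(\id - P_k) = \bigvee_k P_k$, which equals $\id$ by Lemma~\ref{lemma2}.

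Next I would verify that $\sigma$ is a normal trace. Each summand $X \mapsto \eta_n[Q_n X]$ is, on $\M^+$, a positive normal functional (it is $\eta_n$ restricted and rescaled by the central projection $Q_n$, which is the faithful trace $\sigma_n$ on $\M Q_n$ composed with the normal conditional expectation $X \mapsto XQ_n$); normality of $\sigma$ as an increasing sum of normal weights is then standard. The trace property $\sigma(U^*XU) = \sigma(X)$ for $X \in \M^+$ and $U$ unitary — equivalently $\sigma(X^*X) = \sigma(XX^*)$ — follows termwise because $Q_n$ is central, so $\eta_n[Q_n U^*XU] = \eta_n[U^* Q_n X U] = \eta_n[Q_n X]$ using that $\eta_n$ is a trace.

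For semifiniteness I would argue as follows: for each $n$, $\sigma_n = \eta_n\!\restriction_{\M Q_n}$ (after the identification above) is semifinite on $\M Q_n$, so there is an increasing net of projections in $\M Q_n$ with finite $\eta_n$-trace and supremum $Q_n$. Given $X \in \M^+$, $X \neq 0$, pick any $n$ with $Q_n X Q_n \neq 0$ (such $n$ exists, else $X = \sum_n Q_n X Q_n = 0$ since the $Q_n$ are central and sum to $\id$); inside $\M Q_n$ one finds a nonzero $Y \leq Q_n X Q_n \leq X$ with $\sigma(Y) = \eta_n[Q_n Y] < \infty$, which gives semifiniteness.

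Finally, faithfulness. Suppose $X \in \M^+$ and $\sigma(X) = 0$. Then every term vanishes: $\eta_n[Q_n X] = 0$ for all $n$. Writing $X^{1/2}$ and using the trace and centrality of $Q_n$, $\eta_n[Q_n X] = \eta_n[X^{1/2} Q_n X^{1/2}] = 0$ with $X^{1/2}Q_n X^{1/2} \geq 0$; since $\sigma_n$ is \emph{faithful} on $\M Q_n$ and $X^{1/2}Q_n X^{1/2} \in (\M Q_n)^+$, we get $X^{1/2} Q_n X^{1/2} = 0$, hence $Q_n X^{1/2} = 0$, i.e. $X^{1/2} Q_n = 0$ for every $n$. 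Now Lemma~\ref{thelemma} applied to $A = X^{1/2}$ and the family $\{Q_n\}$ (whose join is $\id$ by Step~1) yields $X^{1/2} = X^{1/2}\id = 0$, so $X = 0$. I expect the main obstacle to be the bookkeeping in identifying each term $\eta_n[Q_n \,\cdot\,]$ with the genuinely faithful trace $\sigma_n$ on the reduced algebra $\M Q_n$ so that faithfulness of $\sigma_n$ can legitimately be invoked — once that identification is clean, normality, the trace property, semifiniteness and the final application of Lemma~\ref{thelemma} are routine.
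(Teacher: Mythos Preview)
Your proposal is correct and follows essentially the same route as the paper: define the $Q_n$'s, show they are pairwise orthogonal central projections summing to $\id$, and assemble $\sigma$ as the sum of the $\eta_n(Q_n\,\cdot\,)$. The only cosmetic differences are that for $\sum_n Q_n=\id$ the paper uses Lemma~\ref{thelemma} (applied to $S$ and the $P_l$) rather than your direct appeal to $\bigwedge_k(\id-P_k)=\id-\bigvee_k P_k$ via Lemma~\ref{lemma2}, and for faithfulness/semifiniteness the paper computes the support of each $\sigma_n$ to be $Q_n$ and then invokes \cite[Ch.~5, Lemma 2.12]{takesaki}, whereas you verify these properties by hand.
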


\begin{proof}
We define
\begin{eqnarray*}Q_1&=& P_1
 \\  Q_n&=&P_n \prod_{k<n}(\mathbb{I}-P_k).
\end{eqnarray*}
If $n \neq m$, then
$$Q_nQ_m=P_n\prod_{k<n}(\mathbb{I}-P_k)P_m \prod_{h<m}(\mathbb{I}-P_h)=0.$$

Therefore the $Q_n$'s are orthogonal. It is clear that they are
idempotent.

We now prove that $S=\mathbb{I}-\sum_{n=1}^{+\infty}Q_n=0,$ where
the limit is taken in the strong operator topology.
\\If
$S_m=\mathbb{I}-\sum_{n=1}^{m}Q_n$ then
$$S_m=\prod_{n=1}^{m}(\mathbb{I}-P_n).$$
In fact, by induction, we have:

$$S_1=\mathbb{I}-Q_1=\mathbb{I}-P_1$$
\begin{eqnarray*}
 S_{m+1}&=&\mathbb{I}-\sum_{n=1}^{m+1}Q_n =\mathbb{I}-\sum_{n=1}^{m}Q_n-Q_{m+1}\\&=&
\prod_{n=1}^{m}(\mathbb{I}-P_n)-Q_{m+1}
=\prod_{n=1}^{m}(\mathbb{I}-P_n)-P_{m+1}
\prod_{n=1}^{m}(\mathbb{I}-P_n) \\ &=& (\mathbb{I}-P_{m+1})
\prod_{n=1}^{m}(\mathbb{I}-P_n) =\prod_{n=1}^{m+1}(\mathbb{I}-P_n)
\end{eqnarray*}
then $$ S_mP_l=0 \quad \mbox{ if } \quad  l \leq m.$$ Letting
$m\rightarrow+\infty,$  we get $SP_l=0$, for every
$l\in\mathbb{N}.$ By Lemma \ref{thelemma}, $S=0.$\\ Now, we define
$$\sigma_n(X)=\eta_n(Q_nX)\quad \forall n\in\mathbb{N}.
$$Then $\sigma_n$ is a semifinite normal trace with support $Q_n.$
Indeed, let $R$ be a projection with $\sigma_n(R)=0.$ Then
$$\sigma_n(R)=\eta_n(Q_nR)=0 \Rightarrow Q_n R \leq \mathbb{I} -
P_n \Rightarrow Q_n R(\mathbb{I} - P_n) = Q_n R \Rightarrow Q_n R
P_n=0.$$ But, since the $P_n$'s are in the center of
$\mathfrak{M}$,
$$Q_n RP_n=R P_n \prod_{k<n}( \mathbb{I}-P_k )P_n=RQ_n=0 $$
then $$R \leq \mathbb{I}-Q_n,$$ which implies that $Q_n$ is the
support of $\sigma_n.$

 Thus the function
$\sigma$ on $\mathfrak{M}^+$ defined by
$$\sigma(X)=\sum_{n\in\mathbb{N}}  \sigma_n(X) \quad \quad X\in\mathfrak{M}^+$$ is a semifinite
normal trace whose support is $\sum_{n \in {\mathbb
N}}Q_n=\mathbb{I}$ \cite[ch.5 lemma 2.12]{takesaki}. Therefore,
$\sigma$ is  faithful on $\mathfrak{M}.$

\end{proof}

We now try to remove the assumption that $\FF$ is countable. As we
shall see, some alternative hypothesis should be made.

The following Lemma has been proved in \cite{btt}. We give a
sketch of the proof for the sake of completeness.
\begin{lemma} \label{LEMMA3}
Let $\FF$ be a convex $w^{\ast}$-compact family of normal, finite
traces on a von Neumann algebra $\mathfrak{M}$; assume that, for
each central operator $Z$ with $0\leq Z\leq {\mb I}$, and each
$\eta\in\FF$ the functional $\eta_{\scriptscriptstyle Z}$, defined
by $\eta_{\scriptscriptstyle Z}(X):=\eta(XZ), \; X \in \M,$ still
belongs to $\FF$. Let $\mathfrak{E}\FF$ be the set of extreme
elements of $\FF$. If $\eta_1, \eta_2\in\mathfrak{E}\FF$, $\eta_1
\neq n_2$, and $P_1$ and $P_2$ are their respective supports, then
$P_1$ and $P_2$ are orthogonal.
\end{lemma}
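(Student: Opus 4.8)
The plan is to prove the contrapositive flavored statement directly: assuming $\eta_1,\eta_2$ are distinct extreme points of $\FF$ with supports $P_1,P_2$, I would show that if $P_1 \wedge P_2 \neq 0$ then one of them fails to be extreme. The first step is to observe that $P_1 \wedge P_2 =: P$ is a central projection (the supports of normal traces are central, as recalled in the text just before the theorem), and that $\eta_1$ restricted via $P$ is ``detected'' by $\eta_1$: more precisely, consider the decomposition $\eta_i = (\eta_i)_P + (\eta_i)_{{\mb I}-P}$, where $(\eta_i)_Z(X) = \eta_i(XZ)$. By the standing hypothesis of the Lemma, each of $(\eta_i)_P$ and $(\eta_i)_{{\mb I}-P}$ lies in $\FF$ (taking $Z = P$ and $Z = {\mb I}-P$, both central with $0 \le Z \le {\mb I}$).

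The key step is then a convexity argument on $\FF$. I would compare $\eta_1$ and $\eta_2$ through their common ``piece'' on $P$. Since $P \le P_1$, the functional $(\eta_1)_P$ is nonzero (indeed $\eta_1(P) > 0$ because $P$ is a nonzero subprojection of the support $P_1$; faithfulness of $\sigma_1$ on $\M P_1$ gives this), and similarly $(\eta_2)_P \neq 0$. Now write $\eta_1 = (\eta_1)_P + (\eta_1)_{{\mb I}-P}$; I want to express $\eta_1$ as a proper convex combination of two distinct members of $\FF$, which would contradict extremality. The natural candidates are built by replacing the $P$-part of $\eta_1$ with a suitable scaled version involving $\eta_2$. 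Concretely, since $\FF$ is convex and $w^\ast$-compact and closed under the $Z$-twist, one forms, for a small $t \in (0,1)$, the elements $\eta_1^{\pm}$ obtained by perturbing $(\eta_1)_P$ in the direction of $(\eta_2)_P$ (after rescaling so norms match), while keeping the $({\mb I}-P)$-part fixed; then $\eta_1 = \tfrac12(\eta_1^+ + \eta_1^-)$ with $\eta_1^+ \neq \eta_1^-$, since $(\eta_1)_P$ and $(\eta_2)_P$ are genuinely different (their supports differ inside $P$, or if they happened to be proportional on $P$ one uses that $\eta_1 \neq \eta_2$ elsewhere to produce the needed asymmetry). This contradicts $\eta_1 \in \mathfrak{E}\FF$, so $P_1 \wedge P_2 = 0$, i.e. $P_1$ and $P_2$ are orthogonal.

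I expect the main obstacle to be the bookkeeping in the perturbation: one must ensure the perturbed functionals genuinely stay inside $\FF$ (this is exactly what the hypothesis on the $Z$-twists is designed to give, but the combination $(\eta_1)_P + s\,(\eta_2)_P$ is not literally a $Z$-twist of a single $\eta$, so one instead writes $\eta_1^{\pm}$ as convex combinations of $(\eta_1)_P$, $(\eta_2)_P$, and $(\eta_i)_{{\mb I}-P}$, all of which are in $\FF$, and checks the coefficients sum to $1$) and that $\eta_1^+ \ne \eta_1^-$. The delicate point is guaranteeing the two traces one splits $\eta_1$ into are actually distinct: this forces a small case analysis according to whether $(\eta_1)_P$ and $(\eta_2)_P$ are proportional, and in the proportional case one leverages $P_1 \ne P_2$ (or $\eta_1 \ne \eta_2$) to locate the discrepancy outside $P$. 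Once that dichotomy is handled, the argument closes cleanly.
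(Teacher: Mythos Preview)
Your strategy is genuinely different from the paper's, and as it stands it contains a real gap.

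The paper argues in two stages. First, assuming $P_1P_2\neq 0$, it looks at $\eta_{1,2}(X):=\eta_1(XP_2)\in\FF$, which is majorized by the extreme element $\eta_1$; from this it extracts $\eta_{1,2}=\lambda\eta_1$, so the supports coincide: $P_1P_2=P_1$, and by symmetry $P_1=P_2$. Second, to exclude the equal-support case it invokes a Radon--Nikodym theorem for normal finite traces (\cite[V.2.31]{takesaki}): there is a central $0\le Z\le P$ with $\eta_1(X)=(\eta_1+\eta_2)(ZX)$ on $\M P$, and then the $Z$-twists $\eta_{i,Z}\in\FF$, being majorized by the extreme $\eta_i$, are forced to be scalar multiples of them, which leads to a contradiction. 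The Radon--Nikodym step is the engine of the argument; you never invoke it.

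Your perturbation scheme tries instead to exhibit $\eta_1=\tfrac12(\eta_1^{+}+\eta_1^{-})$ with $\eta_1^{\pm}\in\FF$ built as convex combinations of $(\eta_1)_P$, $(\eta_2)_P$, $(\eta_i)_{{\mb I}-P}$. The difficulty you flag as ``bookkeeping'' is in fact structural: $\eta_1=(\eta_1)_P+(\eta_1)_{{\mb I}-P}$ is a \emph{sum} of two members of $\FF$, with coefficients adding to $2$, not a convex combination. There is no reason that $\eta_1$ lies in the convex hull of the four pieces $(\eta_i)_P,(\eta_i)_{{\mb I}-P}$; for instance, if both $(\eta_1)_P$ and $(\eta_1)_{{\mb I}-P}$ are nonzero and $\eta_1({\mb I})$ exceeds each of $\eta_i(P)$ and $\eta_i({\mb I}-P)$, no convex combination of the pieces can match $\eta_1$ at ${\mb I}$. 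Consequently your $\eta_1^{\pm}$ cannot in general be chosen inside $\FF$ while averaging to $\eta_1$, and the proposed case split (proportional vs.\ non-proportional $P$-parts) does not remedy this. What is missing is precisely a device that compares $\eta_1$ and $\eta_2$ \emph{inside} $\FF$ via a central element, which is what the paper's Radon--Nikodym step supplies and what the bare $Z$-twist hypothesis alone does not give you through an ad hoc midpoint construction.
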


\begin{proof}
Let $P_1, P_2$ be, respectively, the supports of $\eta_1$ and $\eta_2$. We begin with proving that either $P_1=P_2$ or $P_1 P_2=0$. Indeed, assume that
 $P_1P_2\neq 0$. We define
$$\eta_{1,2}(X)=\eta_1(XP_2) \quad \quad x \in \M.$$
Were $\eta_{1,2}=0$, then, $\eta_{1}(P_2)=0$ and therefore
$P_1P_2=0$, which contradicts the assumption. It is easy to see
that the support of $\eta_{1,2}$ is $P_1P_2.$

Thus $\eta_{1}$ majorizes  $\eta_{1,2}.$ But $\eta_{1}$ is extreme
in $\FF.$ Then $\eta_{1,2}=\lambda\eta_{1}$ for some
$\lambda\in[0,1]$. This implies that $\eta_{1,2}$ has the same
support as $\eta_{1}$; therefore $ P_1P_2=P_1$ i.e. $ P_1 \leq
P_2.$ Starting from $\eta_{2,1}(X)=\eta_2(XP_1)$, we get, in
similar way, $ P_2 \leq P_1.$ Therefore, $P_1P_2 \neq 0$ implies
$P_1=P_2$.

However,  two different traces of $\mathfrak{E}\FF$ cannot have
the same support. Indeed, assume that there exist $\eta_1,
\eta_2\in\mathfrak{E}\FF$ having the same support $P.$ Since $P$
is central, we can consider the von Neumann algebra
$\mathfrak{M}P$. The restrictions of $\eta_1, \eta_2$ to
$\mathfrak{M}P$ are normal faithful finite traces. By \cite[ch.5.
2.31]{takesaki} there exist a central element $Z$ in
$\mathfrak{M}P$ with $0\leq Z \leq P$ ($P$ is here considered as
the unit of $\M P$) such that
\begin{equation} \label{uno}\eta_1(X)=(\eta_1+\eta_2)(ZX) \quad \forall
X\in\mathfrak({\M}P)_+.\end{equation} The operator $Z$ belongs to
the center of $\mathfrak{M}$. Therefore the functionals
$$\eta_{\scriptscriptstyle {1,Z}}(X):=\eta_1(XZ) \quad \quad \eta_{\scriptscriptstyle {2,Z}}(X):=\eta_2(XZ) \quad\quad  X\in\mathfrak{M}$$
belong to the family $\FF$ and are majorized, respectively, by the
extreme elements $\eta_1,\eta_2.$ Then, there exist $\lambda \in
[0,1[$ and $\mu \in]0,1]$ such that
$$\eta_1(XZ)=\lambda\eta_1(X)\quad \quad \eta_2(XZ)=\mu\eta_1(X), \quad \forall X \in \M.$$
>From the equalities, it follows, for instance, that either
$\eta_1$ is a convex combination of $\eta_2$ and $0$ or $\eta_2$
is a convex combination of $\eta_1$ and $0$. This is absurd.
\end{proof}
\berem It is worth noticing that the assumptions of the previous
Lemma are satisfied when $\FF$ is the family of all traces $\eta$
on $\mathfrak{M}$ such that $\|\eta\|=1$. \enrem
\begin{lemma}\label{lemma9}
Let $\FF$ be a convex $w^{\ast}$-compact family of positive linear
functionals on a  $C^{\ast}$-algebra $\Ao$ and let
$\mathfrak{E}\FF$ the set of extreme elements of $\FF.$ We
have:$$\sup_{\eta\in \FF}\eta(a^{\ast}a)=\sup_{\eta\in
\mathfrak{EF}}\eta(a^{\ast}a) \quad\quad \forall a\in\Ao.$$
\end{lemma}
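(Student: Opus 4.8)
The plan is to read this identity as an instance of the Bauer maximum principle. One inequality is immediate: since $\mathfrak{E}\FF\subseteq\FF$, we have $\sup_{\eta\in\mathfrak{E}\FF}\eta(a^{\ast}a)\le\sup_{\eta\in\FF}\eta(a^{\ast}a)$ for every $a\in\Ao$, and only the reverse inequality needs work. Fix $a\in\Ao$ and observe that the map $\eta\mapsto\eta(a^{\ast}a)$ is affine in $\eta$ and $w^{\ast}$-continuous, being evaluation at the fixed element $a^{\ast}a$. On the $w^{\ast}$-compact convex set $\FF$ such a function attains its supremum, and by Bauer's maximum principle (an affine, upper semicontinuous real function on a compact convex set attains its maximum at an extreme point) there is $\eta_{0}\in\mathfrak{E}\FF$ with $\eta_{0}(a^{\ast}a)=\max_{\eta\in\FF}\eta(a^{\ast}a)$. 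Hence $\sup_{\eta\in\FF}\eta(a^{\ast}a)=\eta_{0}(a^{\ast}a)\le\sup_{\eta\in\mathfrak{E}\FF}\eta(a^{\ast}a)$, and the two suprema coincide.

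If one prefers a self-contained argument using only the Krein--Milman theorem, I would proceed as follows. We may assume $\FF\neq\emptyset$; since $\FF$ is $w^{\ast}$-compact it is norm bounded, so $M:=\sup_{\eta\in\mathfrak{E}\FF}\eta(a^{\ast}a)$ is finite (indeed $M\le(\sup_{\eta\in\FF}\norm{\eta})\,\norm{a}^{2}$), and by Krein--Milman $\mathfrak{E}\FF\neq\emptyset$ and $\FF$ is the $w^{\ast}$-closed convex hull of $\mathfrak{E}\FF$. Take an arbitrary $\eta\in\FF$ and a net $(\zeta_{\beta})$ of convex combinations $\zeta_{\beta}=\sum_{i}\lambda_{i}^{(\beta)}\eta_{i}^{(\beta)}$ of elements of $\mathfrak{E}\FF$ converging to $\eta$ in the $w^{\ast}$-topology. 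For each $\beta$, convexity and $\eta_{i}^{(\beta)}(a^{\ast}a)\le M$ give $\zeta_{\beta}(a^{\ast}a)\le M$; since $\zeta\mapsto\zeta(a^{\ast}a)$ is $w^{\ast}$-continuous, passing to the limit yields $\eta(a^{\ast}a)\le M$. As $\eta\in\FF$ was arbitrary, $\sup_{\eta\in\FF}\eta(a^{\ast}a)\le M$, which together with the trivial inequality proves the claim.

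The main obstacle here is essentially nil; the only points requiring a little care are that $\eta\mapsto\eta(a^{\ast}a)$ is genuinely affine and $w^{\ast}$-continuous (which it is, being evaluation at the fixed algebra element $a^{\ast}a$), and that $\mathfrak{E}\FF$ is nonempty so that the right-hand supremum is meaningful --- guaranteed by Krein--Milman once $\FF$ is known to be nonempty, convex and $w^{\ast}$-compact. Note that positivity of the functionals is not needed for the supremum identity itself; it is used (together with $w^{\ast}$-compactness) only to ensure that the common value is finite.
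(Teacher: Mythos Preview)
Your proof is correct. Your second argument --- approximate an arbitrary $\eta\in\FF$ by convex combinations of extreme points via Krein--Milman, bound each convex combination by $M=\sup_{\eta\in\mathfrak{E}\FF}\eta(a^{\ast}a)$, and pass to the limit using $w^{\ast}$-continuity of evaluation at $a^{\ast}a$ --- is exactly the paper's proof, only phrased with nets rather than with an $\varepsilon$. Your first argument packages the same mechanism under the name of Bauer's maximum principle; this is slightly slicker, since it yields directly that the maximum is \emph{attained} at an extreme point rather than merely that the suprema agree, but it relies on citing a result that the paper effectively reproves by hand. Your remark that positivity plays no role in the identity itself and is needed only for finiteness is also accurate and a nice observation.
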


\begin{proof}
It is clear that
$$\sup_{\eta\in
\FF}\eta(a^{\ast}a)\geq\sup_{\eta\in
\mathfrak{E}\FF}\eta(a^{\ast}a) \quad\quad \forall
a\in\mathfrak{A}_0.$$ But every $\eta\in \mathfrak{F}$ is in
$w^{\ast}$-closure of the convex hull of $\mathfrak{E}\FF$, thus
$$\forall a\in\mathfrak{A}_0 \quad \forall \epsilon>0 \quad
\exists \eta_1,\eta_2......\eta_n\in\mathfrak{E}\FF\quad \quad
\lambda_1,\lambda_2......\lambda_n\in\mathbb{R}^+\cup 0 \,:
\sum_{i=1}^ {n} \ \lambda_i = 1 $$

such that,
$$\eta(a^{\ast}a)< \sum_{i=1}^ {n} \ \lambda_i \eta_i(a^{\ast}a) + \varepsilon \leq
\sum_{i=1}^{n} \ \lambda_i \sup_{\eta\in
\mathfrak{E}\FF}\eta(a^{\ast}a) + \varepsilon =\sup_{\eta\in
\mathfrak{E}\FF}\eta(a^{\ast}a)+\epsilon.
$$
Then
$$\sup_{\eta\in
\FF}\eta(a^{\ast}a)\leq \sup_{\eta\in
\mathfrak{E}\FF}\eta(a^{\ast}a) \quad\quad \forall
a\in\mathfrak{A}_0.$$
\end{proof}

\begin{thm}\label{Teorema}
Let $\FF$ be a convex $w^{\ast}$-compact sufficient family of
normal, finite traces on a von Neumann algebra $\mathfrak{M}$;
assume that, for each central operator $Z$, with $0\leq Z\leq {\mb
I}$, and each $\eta\in\FF$ the functional
$\eta_{\scriptscriptstyle Z}(X):=\eta(XZ)$ belongs to $\FF$. Let
$\mathfrak{E}\FF$ be the set of extreme elements of $\FF$. Then
the function $\sigma$ on $\mathfrak{M}^+$ given by
$$\sigma(X)=\sum_{\eta\in \mathfrak{E}\FF}  \eta(X) \quad \quad
X\in\mathfrak{M}^+$$ is a faithful semifinite normal trace on
$\mathfrak{M}.$
 \end{thm}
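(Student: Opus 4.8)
The plan is to reduce everything to the structural lemmas already established; the situation is parallel to the proof of the preceding theorem, but cleaner, because the supports of the extreme traces are automatically orthogonal. First I would record the starting data: by Lemma~\ref{LEMMA3} the supports $P_\eta$, $\eta\in\mathfrak{E}\FF$, form a family of \emph{mutually orthogonal} projections of $\M$, each of them central (as recalled at the beginning of this section); moreover $\mathfrak{E}\FF\neq\emptyset$ by the Krein--Milman theorem, since $\FF$ is convex and $w^{*}$-compact. As in the countable case, the crux is to identify the support of the candidate $\sigma$ with $\id$.

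The key intermediate step is to prove that $\mathfrak{E}\FF$ is \emph{itself} a sufficient family. Assume $X\in\M^{+}$ and $\eta(X)=0$ for every $\eta\in\mathfrak{E}\FF$. Writing $X=a^{*}a$ with $a=X^{1/2}\in\M$ and invoking Lemma~\ref{lemma9} gives
$$\sup_{\eta\in\FF}\eta(X)=\sup_{\eta\in\mathfrak{E}\FF}\eta(X)=0,$$
hence $\eta(X)=0$ for \emph{every} $\eta\in\FF$; since $\FF$ is sufficient, $X=0$. Thus $\mathfrak{E}\FF$ is a sufficient family of normal, finite (hence semifinite) traces, so Lemma~\ref{lemma2} applies and yields $\bigvee_{\eta\in\mathfrak{E}\FF}P_\eta=\id$. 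Together with the mutual orthogonality of the $P_\eta$ this gives $\sum_{\eta\in\mathfrak{E}\FF}P_\eta=\id$ in the strong operator topology.

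Finally I would assemble $\sigma$. Since $P_\eta$ is the support of $\eta$, we have $\eta(X)=\eta(XP_\eta)$ for all $X\in\M$, so each $\eta$ is a normal semifinite trace on $\M$ whose support is the central projection $P_\eta$; as the $P_\eta$ are orthogonal with strong sum $\id$, the map $\sigma(X)=\sum_{\eta\in\mathfrak{E}\FF}\eta(X)$, $X\in\M^{+}$, is --- by the same argument used for the trace $\sigma$ in the proof of the preceding theorem, cf.\ \cite[ch.5 lemma 2.12]{takesaki} --- a normal semifinite trace on $\M$ with support $\bigvee_{\eta}P_\eta=\id$, and is therefore faithful. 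For the record, semifiniteness can also be checked by hand: for $X\in\M^{+}$ and a finite set $F\subseteq\mathfrak{E}\FF$ the element $Y_F:=X\sum_{\eta\in F}P_\eta$ satisfies $0\le Y_F\le X$, $\sigma(Y_F)=\sum_{\eta\in F}\eta(X)<\infty$, and $\sigma(X)=\sup_F\sigma(Y_F)$; and faithfulness is in any case immediate from the sufficiency of $\mathfrak{E}\FF$, since $\sigma(X)=0$ forces $\eta(X)=0$ for every extreme $\eta$.

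I expect the main obstacle to be precisely the middle step: upgrading ``$\eta$ kills every extreme trace'' to ``$\eta$ kills every member of $\FF$''. Approximating a general $\eta\in\FF$ by convex combinations of extreme traces only controls $\eta(X)$ to within $\varepsilon$, which is not by itself enough to conclude $\eta(X)=0$ for an arbitrary positive $X$; it is exactly the equality of suprema in Lemma~\ref{lemma9} that makes the passage rigorous, and without it neither the sufficiency of $\mathfrak{E}\FF$ nor the identification of the support of $\sigma$ with $\id$ would be available.
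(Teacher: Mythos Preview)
Your proof is correct and follows essentially the same route as the paper: invoke Lemma~\ref{LEMMA3} for the mutual orthogonality of the supports $P_\eta$, use Lemma~\ref{lemma9} to deduce that $\mathfrak{E}\FF$ is sufficient, apply Lemma~\ref{lemma2} to get $\bigvee_\eta P_\eta=\id$, and conclude via \cite[ch.5 lemma 2.12]{takesaki} that $\sigma$ is a normal semifinite trace with support $\id$, hence faithful. Your write-up simply supplies more detail than the paper (the Krein--Milman remark, the explicit $X=a^*a$ step, and the alternative direct checks of semifiniteness and faithfulness).
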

 \begin{proof}
By Lemma \ref{LEMMA3} the function $\sigma$ on $\mathfrak{M}^+$
given by
$$\sigma(X)=\sum_{\eta \in \mathfrak{E}\FF} \eta (X) \quad \quad X\in\mathfrak{M}^+$$ is
a semifinite normal trace. We prove that $\sigma$ is faithful. By
Lemma \ref{lemma9}, $\mathfrak{E}\FF$ is sufficient, then by Lemma
\ref{lemma2}, $\vee P_\eta=\mathbb{I}$, where $\mathbb{I}$ denotes
the identity of $\mathfrak{M}$ and $P_\eta$ the support of the
${{\eta_i}\in \mathfrak{E}\FF}.$  It is clear that $\mathbb{I}$ is
the support of $\sigma$ thus $\sigma$ is a faithful semifinite
normal trace on $\mathfrak{M}.$

\end{proof}

Let ${\FF}=\{\eta_\alpha;\, \alpha \in {\mc I}\}$ be a sufficient
family of normal, finite traces on $\M.$ In order to get a similar
result in the case where $\FF$ is not necessarily convex or
$w^{\ast}$-compact, it is enough to assume that the family $\FF$
is uniformly bounded, i.e. $\eta_\alpha(\mathbb{I})\leq 1 $, for
every $ \alpha\in {\mc I}$.

We put
$$co\{{\FF}\}=\left\{\sum_{i=1}^{n}\lambda_i \eta_{\alpha_i} \,;\quad \quad
\lambda_i\geq 0, \quad \sum_{i=1}^{n}\lambda_i=1, \quad\quad
\eta_{\alpha_i}\in \FF \right\} $$ and let
$\overline{co\{{\FF}\}}$ be its $w^*-$closure.

\begin{cor} Let ${\FF}=\{\eta_\alpha;\, \alpha \in {\mc I}\}$ be a sufficient
family of normal, finite traces on $\M$ such that
$\eta_\alpha(\mathbb{I})\leq 1 $, for every $ \alpha\in {\mc I}$.
Let $\mathfrak{E}\overline{co\{{\FF}\}}$ be the set of all extreme
elements of $\overline{co\{{\FF}\}}.$ Then the function $\sigma$
on $\mathfrak{M}^+$ given by
$$\sigma(X)=\sum_{\eta
\in\mathfrak{E}\overline{co\{{\FF}\}}}  \eta (X) \quad \quad
X\in\mathfrak{M}^+$$ is a faithful semifinite normal trace on
$\mathfrak{M}.$

\end{cor}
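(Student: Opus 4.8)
The plan is to deduce the Corollary from Theorem~\ref{Teorema}, applied to the family $\FF':=\overline{co\{\FF\}}$: once $\FF'$ is shown to satisfy the hypotheses of that theorem, the statement follows at once, since $\mathfrak E\FF'$ is exactly the index set of the sum defining $\sigma$, and Theorem~\ref{Teorema} asserts that this sum is a faithful semifinite normal trace.

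The routine part is to check the ``easy'' hypotheses. Convexity of $\FF'$ is automatic, being the $w^{*}$-closure of the convex set $co\{\FF\}$. For $w^{*}$-compactness, observe that every $\sum_i\lambda_i\eta_{\alpha_i}\in co\{\FF\}$ is a positive functional with $\big\|\sum_i\lambda_i\eta_{\alpha_i}\big\|=\sum_i\lambda_i\eta_{\alpha_i}(\mathbb I)\le 1$; hence $co\{\FF\}$, and therefore $\FF'$, lies in the unit ball of $\M^{*}$, which is $w^{*}$-compact by the Banach--Alaoglu theorem, so $\FF'$ is $w^{*}$-compact. Each element of $\FF'$ is again a finite trace: for fixed $X,Y$ the relation $\eta(XY)=\eta(YX)$ is preserved under convex combinations and under $w^{*}$-limits, and finiteness is clear from $\|\eta\|\le 1$. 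Finally, $\FF'$ is sufficient because $\FF\subseteq\FF'$: if $X\ge 0$ and $\eta(X)=0$ for all $\eta\in\FF'$, then $\eta_\alpha(X)=0$ for all $\alpha$, whence $X=0$.

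Two points require real care, and the second is the genuine obstacle. (i) The elements of $\FF'$ should still be \emph{normal}; since $w^{*}$-limits of normal functionals need not be normal, one must use the structure at hand --- the uniform bound $\eta_\alpha(\mathbb I)\le 1$ and the semifiniteness of $\M$ forced by sufficiency --- to keep the closure inside the normal part of $\M^{*}$. (ii) For every central $Z$ with $0\le Z\le\mathbb I$ and every $\eta\in\FF'$ one needs $\eta_Z\in\FF'$, where $\eta_Z(X)=\eta(XZ)$. The assignment $\eta\mapsto\eta_Z$ is affine and $w^{*}$-continuous, being the Banach-space adjoint of the bounded map $X\mapsto XZ$ on $\M$, so it is enough to know that it carries $\FF$ into $\FF'$; this, however, is \emph{not} inherited from $\FF$ in general. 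The natural way around it is to replace $\FF$ at the outset by its stabilization $\FF^{\sharp}=\{\eta_Z:\eta\in\FF,\ Z\in{\mc Z}(\M),\ 0\le Z\le\mathbb I\}$, which is again sufficient ($\FF\subseteq\FF^{\sharp}$), uniformly bounded ($\|\eta_Z\|=\eta(Z)\le\eta(\mathbb I)\le1$), and made of normal finite traces (normality of $\eta_Z$ follows from $\sigma$-weak continuity of $X\mapsto XZ$), and whose closed convex hull \emph{is} stable under $\eta\mapsto\eta_Z$; one then checks that $\mathfrak E\,\overline{co\{\FF^{\sharp}\}}$ differs from $\mathfrak E\,\overline{co\{\FF\}}$ at most by the zero functional, which adds nothing to $\sigma$. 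Granting (i) and (ii), Theorem~\ref{Teorema} applies to $\FF'$: by Lemma~\ref{lemma9} the set $\mathfrak E\FF'$ is sufficient, by Lemma~\ref{LEMMA3} the supports of its elements are mutually orthogonal, by Lemma~\ref{lemma2} their supremum is $\mathbb I$, and therefore $\sigma=\sum_{\eta\in\mathfrak E\FF'}\eta$ is a well-defined semifinite normal trace whose support is $\mathbb I$, i.e.\ a faithful one.
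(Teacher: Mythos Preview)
Your overall strategy---apply Theorem~\ref{Teorema} to $\FF'=\overline{co\{\FF\}}$---is exactly the paper's. In fact you are more scrupulous than the paper: its proof checks only convexity, $w^{*}$-compactness, and the trace identity, and then invokes Theorem~\ref{Teorema} without ever verifying either the normality of the $w^{*}$-limits or the closure of $\FF'$ under $\eta\mapsto\eta_Z$. Your concerns (i) and (ii) are therefore well taken; the paper simply leaves them unaddressed.

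Your proposed repair of (ii), however, does not go through. The claim that $\mathfrak{E}\,\overline{co\{\FF^{\sharp}\}}$ and $\mathfrak{E}\,\overline{co\{\FF\}}$ differ at most by the zero functional is false in general, so passing to $\FF^{\sharp}$ changes the very sum $\sigma$ that the Corollary specifies. Take $\M=\mathbb{C}^{3}$ (abelian, so every positive functional is a normal trace and the center is all of $\M$) and $\FF=\{\eta_1,\eta_2\}$ with $\eta_1=(1,0,0)$ and $\eta_2=(0,\tfrac12,\tfrac12)$, identifying functionals with their vectors of values on the minimal idempotents; this family is sufficient and satisfies $\eta_i(\mathbb{I})\le 1$. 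Then $\overline{co\{\FF\}}$ is the segment $[\eta_1,\eta_2]$, with extreme set $\{\eta_1,\eta_2\}$, whereas
\[
\FF^{\sharp}=\{(t,0,0):0\le t\le 1\}\cup\{(0,u,v):0\le u,v\le\tfrac12\},
\]
and the extreme points of $\overline{co\{\FF^{\sharp}\}}$ are $\{0,\,(1,0,0),\,(0,\tfrac12,0),\,(0,0,\tfrac12),\,(0,\tfrac12,\tfrac12)\}$. The nonzero elements $(0,\tfrac12,0)$ and $(0,0,\tfrac12)$ belong to the second extreme set but not to the first, and the two resulting sums are different ($x_1+\tfrac12 x_2+\tfrac12 x_3$ versus $x_1+x_2+x_3$). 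Thus, even granting (i), your argument establishes the analogue of the Corollary for $\FF^{\sharp}$, not the Corollary as stated.
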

\begin{proof}
By the assumption, $\FF$ is a subset of the unit ball of the dual
of $\mathfrak{M}.$ Then $\overline{co\{{\FF}\}}$ is a convex
$w^{\ast}$-compact subset of the dual of $\mathfrak{M}.$ It is
easily seen that the elements of $\overline{co\{{\FF}\}}$ are
traces. Indeed every $\sum_{i=1}^{n}\lambda_i \eta_{\alpha_i}$ is
a trace since
$$\sum_{i=1}^{n}\lambda_i
\eta_{\alpha_i}(x^*x)=\sum_{i=1}^{n}\lambda_i
\eta_{\alpha_i}(xx^*)$$ and if $\{\eta_\gamma\}$ is a net of
traces and $\eta=w^*-\lim_\gamma \eta_\gamma,$ we have:

$$\eta(x^*x)=\lim_\gamma \eta_\gamma(x^*x)=\lim_\gamma \eta_\gamma(xx^*)=\eta(xx^*) \quad x\in\mathfrak{M}.$$
By Theorem \ref{Teorema} the function $\sigma$ on $\mathfrak{M}^+$
given by
$$\sigma(X)=\sum_{\eta
\in\mathfrak{E}\overline{co\{{\FF}\}}}  \eta (X) \quad \quad
X\in\mathfrak{M}^+$$ is a faithful semifinite normal trace on
$\mathfrak{M}.$
\end{proof}
\bibliographystyle{amsplain}

\end{document}